\newtheorem{theorem}{Theorem}
\newtheorem{definition}{Definition}
\title{Center-Periphery Structure in Communities: Extracellular Vesicles}
\author[1]{Eleanor Wedell\thanks{EW and MP contributed equally to this manuscript}}
\author[1]{Minhyuk Park}
\author[2]{Dmitriy Korobskiy}
\author[1]{Tandy Warnow\thanks{warnow@illinois.edu}}
\author[1,3]{George Chacko\thanks{chackoge@illinois.edu}}
\affil[1]{Department of Computer Science, University of Illinois Urbana-Champaign, Urbana, IL 61801}
\affil[2]{NTT DATA, McLean, VA, 22102}
\affil[3]{Office of Research, Grainger College of Engineering, University of Illinois Urbana-Champaign, Urbana, IL 61801}
\begin{document}
\maketitle

\begin{abstract}

\noindent  
Clustering and community detection in networks are of broad interest and have been the subject of extensive research that spans several fields. We are interested in the relatively narrow question of detecting communities of scientific publications that are linked by citations. These publication communities can be used to identify scientists with shared interests who form communities of researchers. Building on the well-known k-core algorithm, we have developed a modular pipeline to find publication communities. We compare our approach to communities discovered by the  widely used Leiden algorithm for community finding.  Using a quantitative and qualitative approach, we evaluate community finding results on a citation network consisting of over 14 million publications relevant to the field of extracellular vesicles.
\end{abstract}

\section{Introduction} 

Community detection in networks is of broad interest and has been discussed in comprehensive reviews \citep{Fortunato2009,Fortunato2010,Javed2018}. At a high level, the community detection problem amounts to identifying groups within a complex network that share some common properties.  As observed by \cite{Coscia2011}, this definition suffers from some degree of imprecision given the diverse ways in which communities can be defined and a richness of perspectives. For example, a community detection approach may focus on vertex similarity or edge-density; disjoint, overlapping or hierarchical community structure; and static versus dynamic networks.  \cite{Newman2006} had also noted distinctions  between community discovery and graph partitioning. Thus, the context of the question being asked and the techniques being employed tend to determine the flavor of community detection in a study.

From the perspective of scientometrics, detecting a research community can be framed as a community finding problem in which communities of publications defining areas of research are discovered in the scientific literature. First, the scientific literature is modeled as a network with publications as nodes and citations as directed edges \citep{Boyack2019}. 
From this network,  an area of research is defined by a community of publications--a sufficiently citation-dense area in the network. Citation networks of scientific literature can be constructed using different approaches. Direct citations were used to build clusters of articles from a dataset of over 10 million publications \citep{Waltman2012}; this methodology was also used in building citation maps from 19 and 43 million publications \citep{Boyack2014}. 

Direct citation, bibliographic coupling, and co-citation have been compared for their relative value in identifying research fronts \citep{Boyack2010}, with a hybrid approach involving bibliographic coupling and textual similarity performing the best.  A subsequent study conducted at a larger scale and with improved evaluation criteria suggested that direct citation was the most promising \citep{Klavans2017}. We use direct citations in this study.

Once constructed, citation networks can be analyzed using different community finding or clustering approaches \citep{Boyack2019,Ahlgren2020,Traag2019,LeoneSciabolazza2017,Waltman2012,Subelj2016}. Of these approaches, a recent development is the availability of the Leiden algorithm, which offers better partitioning and performance \citep{Traag2019} compared to a preceding approach, the Louvain algorithm \citep{Blondel_2008}, which seeks to optimize the modularity quality function \citep{Newman2006}. 

We are interested in how research communities form and collaborate around research questions \citep{Kuhn1970,crane1972invisible}. Collaboration within the scientific community goes beyond co-authorship, and includes building upon prior discovery by other researchers; citations between publications thus indicate this more general form of collaboration, and the publication communities detected in this process are evidence of such collaborations. Further, the authors of the publications in the publication communities represent communities of researchers working on related questions. Historical studies have estimated the size of research communities to be in the order of a few hundreds \citep{Price1966,crane1972invisible,Kuhn1970,Mullins_1985}.  However, this question merits re-examination in the modern scientific enterprise; expanded, globalized, and electronically connected.

Since researchers can work on more than one problem and be considered members of more than one research community,  to find author communities we begin by finding communities of publications. Then, for each publication community, the authors of the publications in the community constitute  a researcher community organized around the questions defined by the publication community. 

We are interested in articles linked by citation as the products of a research community rather than articles clustered by textual similarity, therefore, we use direct citations to discover communities. In an earlier study \citep{Chandrasekharan2021}, we explored this approach  to detect publication communities and subsequently their author communities in networks of biology literature. We used an ensemble technique to find publication communities  coupled with limited qualitative analysis, where the publication communities were  significantly overlapping  in clusters identified by the Leiden algorithm and also the Markov Clustering (MCL) algorithm \citep{VanDongen2008}.

We were specifically interested in whether the author communities we found exhibited substructure indicating the influence of  a few individuals associated with the majority of publications while exhibiting different degrees of collaboration within and across subgroups \citep{Price1966,crane1972invisible}. While we did detect such center-periphery substructure \citep[p.~60]{Breiger2014} in both publication and author communities, our findings were potentially limited by the clustering methods we used, Leiden and MCL, since neither is designed to detect or require substructure when identifying communities.  

Here, we aim to investigate more carefully whether publication clusters exhibiting center-periphery structure exist in citation networks of the scientific literature. The central idea is that each community contains core nodes representing the ``center" of center-periphery organization and additional nodes representing the ``periphery''; our approach first finds the core nodes and then augments the cluster to include peripheral nodes.   Since prior community detection methods are not explicitly designed to detect such communities, we propose a new modular pipeline that specifically aims to find such communities. To find the core node sub-community, we combine two approaches represented in the clustering and community detection literature: first, that valid communities should have positive modularity score \citep{Fortunato2006}, and second that each community should be {\em dense}, which is expressed by the ratio between the average node degree  and the number of nodes in the network. 

Thus, our approach combines three ideas from the literature: the center-periphery model of authorship communities extended here to a model for publication communities, positive modularity for individual communities, and sufficient citation density for individual communities. 
By distinguishing between  the core and non-core nodes, we can require that the positive modularity and sufficient citation density requirements hold for the subclusters of core nodes but not necessarily for the clusters that combine both core and non-core nodes. This  distinction potentially enables us to better model real world community structure.

Our modular pipeline begins by finding clusters of core nodes, building on the ideas of \cite{Giatsidis2011} who quantified the cohesiveness of a community using the ``k-core" concept from graph theory \citep{matula1983smallest}. Although optimizing the total modularity in the clustering has drawbacks (as demonstrated in \cite{Fortunato2006}), we also required that the clusters  individually have modularity scores above $0$; this is a relatively mild criterion that seeks internal cohesion, and has been considered in the prior literature  (e.g., \cite{Newman2004,Fortunato2006} to be evidence of a valid community. 

We tested our pipeline on a network of over 14 million publications that we constructed by harvesting citing and cited articles from a seed set defined by the keyword  ``exosome". This keyword captures articles from the field of extracellular vesicles, which may be important for intercellular communication and development of some diseases, as well as having potential for therapy \citep{edgar2016,kalluri2020,raposo2021}. We chose extracellular vesicles~\citep{raposo2021} as the focus of this study for two reasons: first, it is a large research area, and second, because it is rapidly expanding--it  has seen spectacular numbers of publications each year since 2010, and therefore represents an excellent test case for community finding methods in the modern scientific enterrprise.

Expecting that not all communities discovered in such a large network would be directly relevant to exosomes, we use 1,218 cited references from 12 relevant review articles as expert-identified markers for specificity in the communities we discover. We report our findings in the following sections. 

\section{Materials and Methods}
\subsection{Data}
A citation network consisting of 14,695,475 nodes and 99,663,372 edges was generated using the Dimensions bibliography \citep{hook2018dimensions}. Briefly, a ``seed'' set, $S$, was obtained by performing a text search for the term ``exosome'' with years of publication restricted to 2010 or earlier. This constraint was applied to allow every element in the seed set to have accumulated at least 10 years of citations. The search retrieved 11,156 publications of type article from Dimensions. To capture publications proximal by citation to the seed set, a network was constructed using a protocol we labelled SABPQ.  First we start with the seed set $S$. Set $A$ is the set of publications that cite at least one publication from set $S$. Similarly, set $B$ is the set of publications that are cited by at least one publication from set $S$. Once the sets $S$, $A$, and $B$ are identified, then we define set $P$ as the set of publications that cite at least one publication from the set $S \cup{} A \cup{} B$.  Similarly, set $Q$ is the set of publications that are cited by at least one publication from the set $S \cup{} A \cup{} B$. The  network contains directed edges defined by citations; if publication $x$ cited $y$ then we created an edge from $x$ to $y$. The SABPQ protocol was implemented using Dimensions on BigQuery in Google Cloud Services. The data was then exported to Google Cloud Storage Bucket and subsequently exported to a PostgreSQL database for further analysis.

\emph{Marker Nodes and Specificity} As marker nodes for our analysis, we used 1,218 articles cited in 12 recent reviews on extracellular vesicles  and exosome biology~\citep{vNiel2018,kalluri2020,verdi2021,ghoroghi2021,lananna2021,busatto2021,he2021,schnatz2021,lelay2021,leidal2021,clancy2021,raposo2021}. All 1,218 markers are present in our constructed network with 14,695,475 nodes.  Any community containing at least one marker node was considered relevant. Marker nodes were matched to clusters using identifiers in our network. For example, the marker node with title ``Tumour exosome integrins determine organotropic metastasis"  and DOI 10.1038/nature15756  is identified as node 4431204 in our network. The complete list of marker nodes is available on our Github site \citep{Park2021}. 

\subsection{Clustering methods}

\paragraph{Leiden}
We used version 1.1.0 of the Java implementation for the Leiden algorithm \citep{Traag2019} provided by the Centre for Science and Technology Studies and available in Github \citep{leiden-github}. 
We ran  Leiden in default mode, which means that the quality function being optimized was the Constant Potts Model rather than modularity. 
Leiden includes a parameter for the resolution value, which we vary in our experiments from 0.05 to 0.95.

\paragraph{New clustering methods}

Here we describe at a high level the clustering and community-finding methods we developed in our study. The methods we developed and use in this study are freely available in Github, and the locations of these software and exact commands we used are provided in the supplementary materials. We also provide software version numbers and commands for the existing codes that we use in the supplementary materials. We note here that the codes we developed rely on NetworKit  \citep{networkit2016}, which is an open-source Python module designed for scalable network analysis.

In our approach, the objective was to produce a set of clusters each of which has core nodes and peripheral nodes, consistent with the ``center-periphery" structure described earlier.  These clusters are considered to be ``publication communities", with two types of members: core members that are densely connected to each other and peripheral (non-core) members connected to the core members but with fewer edges within a cluster. 

Our approach requires values for $k$ and $p$, where $k$ specifies a minimum connectivity between the core nodes, and $p$ indicates a minimum connectivity between each non-core (``periphery") node and the core nodes.
These parameter values for $k$ and $p$ are provided by the user, and different choices for these parameters will produce different clusterings. 

The required minimum connectivity $k$ between core nodes is related to the $k$-core concept in graph theory, which we now describe. A $k$-core of a network $N$ is a largest connected subnetwork $A$ of $N$ such that every node in $A$ is adjacent to at least $k$ other nodes in $A$  \citep{seidman1983,matula1983smallest,pittel1996}. The $k$-cores can be calculated in polynomial time \citep{matula1983smallest}, and our new clustering methods build on these algorithms.

We are also interested in the modularity scores of the clusters that are produced by each method, as
given in Definition \ref{def:modularity}:

\begin{definition}
\label{def:modularity}
The modularity of a single cluster $s$ within a network $N$, denoted by mod(s), is
given by
$$mod(s) = \frac{l_{s}}{L} -  \left ( \frac{d_{s}}{2L} \right )^{2}$$ where
 $l_{s}$ is the number of internal edges in cluster $s$, $d_{s}$ is the sum of the degrees of the nodes inside $s$, and $L$ is the number of total edges in the network $N$ \citep{Fortunato2006}.
 The total modularity of a clustering is the sum of the modularity scores of its clusters.
 \end{definition}

Rather than aiming to maximize the total modularity of the
clustering, we will only require that each cluster have positive modularity; as noted
in \cite{Fortunato2006}, this approach aims to detect valid communities.
We now define some additional terms that we will use in designing new  clustering methods:
\begin{definition}
\label{def:valid}
Given a network $N$, a clustering $\mathcal{C}$,  and a cluster $C$ drawn from $ \mathcal{C}$ where $C$ is partitioned into
core nodes   and non-core nodes, 
we will say:
\begin{itemize}
\item $C$ is $k$-valid if and only if each core node in $C$ is adjacent to at least $k$ other core nodes in $C$
\item  $C$ is $m$-valid if and only if the sub-cluster induced by the core nodes  is connected and 
has a positive modularity score
\item $C$ is $p$-valid if and only if each non-core node is adjacent to at least $p$ core nodes in $C$
\item $C$ is $kmp$-valid if and only if it is $k$-valid, $m$-valid, and $p$-valid
\item The clustering $\mathcal{C}$ is $kmp$-valid if and only if every cluster $C \in \mathcal{C}$ is $kmp$-valid
\end{itemize}
Note that if a cluster does not contain non-core nodes then it is {\em vacuously} $p$-valid.
\end{definition}

The clustering methods that we develop seek to produce $kmp$-valid clusters, so that we can interpret these clusters as 
communities with center-periphery structure. Moreover, we are interested in clusterings that produce a large number of $kmp$-valid clusters, as well as those that include as many nodes as possible in the $kmp$-valid 
clusters (which by definition must be non-singleton when $k>1$). 
Hence we explore different techniques that seek to optimize  these two opposing criteria. 

We also require positive modularity in the  core node sub-clusters for each cluster. 
This is  a relatively mild requirement that avoids 
cases where the core node sub-cluster may be k-valid and connected but might not reflect a preference for itself over the outside.
Consider the case where a 10-clique, a complete graph on 10 nodes, is contained in a clique with 20 nodes.
This 10-clique would satisfy k-validity for $k \leq 9$ and would be connected, but would not have positive modularity.
By enforcing positive modularity, we would avoid returning such clusters. This example illustrates the advantage of enforcing 
positive modularity even though the probability of it occurring in a real world network is likely to be small. 
We also note that enforcing positive modularity in the core node sub-cluster (or even in the final cluster that contains both core and non-core nodes) is 
not the same as trying to maximize the sum of the modularity scores of the individual clusters  (total modularity score).
In other words, enforcing positive modularity does not have the same vulnerability  to the resolution limit that was established for the modularity criterion, which 
seeks to maximize  the total modularity score \citep{Fortunato2006}.

\subsubsection{Four-Stage $kmp$-Clustering}
We designed a four-stage pipeline that is designed to enable the user to explore different clustering options, while guaranteeing that the output is a $kmp$-valid clustering.
The input is a network $N$ and values for the parameters $k$ and $p$.
At a high level,  the first stages aim to construct the core member components. The second stage extracts valid sub-clusters from those generated by the first stage.  
The third stage augments these clusters with additional members, most likely non-core members, though some might qualify as core members, and the fourth stage assigns core or non-core status to the nodes, and retains only those clusters that are $kmp$-valid. 
We begin with a description of the overall multi-stage structure of our new clustering methods; note that stages 2 and 3 are optional.

\begin{itemize}
\item Stage 1: Cluster the network $N$ into disjoint clusters (core members), so that each non-singleton cluster  is $k$-valid and $m$-valid.
\item Stage 2: Attempt to break each non-singleton cluster produced in Stage 1 into a set of  pairwise disjoint clusters, each of which is $k$-valid at minimum.
\item Stage 3: For each non-singleton cluster, add unclustered nodes, nodes that are not in any non-singleton cluster, as non-core (peripheral) members, provided that they 
are adjacent to at least $p$ core nodes in the selected cluster. This is the {\em augmentation} stage, which adds non-core nodes to the clusters produced in the earlier stage.
\item Stage 4: Process the clustering that is received so that each final cluster is partitioned into core and non-core members, and so that the clustering is $kmp$-valid.
\end{itemize}

Thus, Stages 1 and 2 together are directed at finding core members of clusters, with Stage 1 directed at clusters with large numbers of core nodes and Stage 2 aimed at extracting smaller clusters within these larger clusters.  
At  the end of Stage 1, all clusters will be $k$-valid and $m$-valid.  If the optional Stage 2 is applied,  the clusters it produces will be $k$-valid and connected, but may no longer have positive modularity.
Neither of these stages introduces any non-core nodes, and so the output of each stage is vacuously $p$-valid. 
 Stage 3  augments the clusters to include non-core nodes;  by design the clusters will be connected, $k$-valid, and $p$-valid, but  depending on the outcome of
Stage 2, they may not have positive modularity and so may not be $m$-valid. 
Stage 4 is designed to ensure that all final clusters are $kmp$-valid, and so may modify or discard clusters found in the earlier stages.
However, after Stage 4 is run,  the output clustering is guaranteed to be $kmp$-valid. Furthermore,   the clusters produced are parsed into core and non-core nodes.

We now describe the techniques we have developed for each stage.
For Stage 1, we present iterative k-core (IKC) clustering, a method that is inspired by the k-core concept in graph theory.
For Stage 2, we also present two different techniques: recursive Graclus (RG) and iterative Graclus (IG), both of which are based on the Graclus \citep{Dhillon2007} clustering method used in its default setting and applied to split a graph into two parts.
Stage 3 is implemented using a straightforward algorithm that we describe below.
 In contrast to the earlier stages, Stage 4 involves multiple steps, and is described below.  
 This multi-stage design provides a flexible framework by allowing different techniques to be used at each stage.

\subsubsection{Stage 1: Iterative k-core clustering (IKC)} To motivate the IKC algorithm, we first describe the technique that computes $k$-cores and then describe the iterative method.

\paragraph{$k$-core} 

For a network $N$ and specified positive integer value for $k$, a $k$-core of a network $N$ is the largest connected subnetwork $A$ of $N$ such that every node in $A$ is adjacent to at least $k$ other nodes in $A$.
Note that  for every network that does not have any isolated vertices (i.e., nodes of degee $0$),  each connected component is a $1$-core of the network. The distribution of $k$-core sizes also indicate how quickly a network shrinks as $k$ increases \citep{Leskovec_2008}.

The identification of the $k$-core  for a maximum achieved value of $k$ in a network has been proposed as a quality measure for community finding that measures cohesiveness and suggests collaboration \citep{Giatsidis2011}. This idea is reiterated in \cite{kong2019,malliaros2019}, who discuss applications of the $k$-core  in biology and real world networks. 

 The $k$-cores can be calculated in polynomial time \citep{matula1983smallest}, as follows. 
First, we calculate the degree of every node in the network. Then, every node of degree less than $k$ is deleted from the graph, and this process repeats until every node has degree at least $k$ (i.e., every node
is adjacent to at least $k$ other nodes).  Every connected component that remains is called a $k$-core of the network.  
As an example, given a network that contains  two connected components: one is a clique of size $100$ and the other has a node $x$ that is adjacent to 2000 other nodes,  each of which is only adjacent to $x$ (and so have degree $1$). 
Note that for all $k$ with $2 \leq k \leq 99$,  the $k$-core of this network  is the clique of size 100.

\paragraph{$k$-core clustering}
  The simple $k$-core clustering method takes as input a network $N$ and
a value for $k$, and computes the $k$-cores of the network.
The set of the $k$-cores is returned as the clustering. 
By construction, the simple $k$-core clustering method produces clusters that are $k$-valid and connected.
However, it does not constrain the clusters to have positive modularity.

\paragraph{Iterative k-core clustering}
To improve on the simple k-core clustering technique for our purposes, we developed an iterative k-core (IKC) algorithm. 
The  input  to the IKC algorithm is a network $N$ and a positive integer $k$. IKC then operates as follows:
\begin{itemize}
\item We will construct a bin $B$ of clusters that will be returned by the IKC clustering. 
In this step, we initialize $B$ to be the empty set. 
\item 
We run the k-core labelling algorithm, which labels every node in the network with a non-negative integer. 
We let $L$ be the largest label found in this labelling. 
If $L < k$, then IKC exits, and returns the clusters in the bin $B$.Otherwise, the $L$-cores (i.e., the connected components that are labelled by $L$)  of the network are  evaluated as  potential clusters.
\item 
An $L$-core $A$ is added to the bin $B$ if and only if $A$ has positive modularity.
\item The $L$-core is then deleted from the network, 
and the residual network is recursively analyzed by IKC. The stopping condition is when all the nodes have been deleted from the network.
\end{itemize}

This procedure produces a collection of clusters, each of which has the following properties:
 (i) each cluster is connected and has positive modularity, and hence each cluster is $m$-valid and
(ii) each cluster is $k$-valid, which means every node in the cluster is adjacent to at least $k$ other nodes in the cluster. 

\subsubsection{Stage 2: Finding clusters within clusters using Graclus} In Stage 2  we seek to discover $k$-valid clusters that exist within larger $k$-valid clusters. 
For this, we use the Graclus clustering method \citep{Dhillon2007}. 
Graclus has been previously used to cluster bibliometric data \citep{Dhillon2007,devarakonda2020,Subelj2016}, but here we use it to split a given cluster into two sub-clusters. 
We use Graclus to optimize its default criterion, which is the normalized cut criterion. 
In this setting, we seek a partition of a given cluster $C$ into two parts $C_1$ and $C_2$ so as to 
minimize $${{links(C_1,C_2)}\over{links(C_1,C)}}+ {{links(C_1,C_2)}\over{links(C_2,C)}},$$
where $links(A,B)$ denotes the number of edges with one endpoint in $A$ and the other endpoint in $B$.

As is the case in other optimization methods, local search in Graclus helps the optimizer escape poor local minima; its default setting does no iterations but this can be modified by specifying the number of iterations. In this study we explore both the default mode, $l=0$ (no iterations) and $l=2000$ iterations.  
We implement our use of Graclus in two different ways: recursively and iteratively. 
Thus, we use Graclus in four different ways.

\paragraph{Recursive Graclus} The input to this method is a clustering of  the network $N$ and the value for the parameter $k$.
We create  a bin $B$ of clusters (setting it initially to the empty set).
We take a cluster $C$ from the clustering and apply Graclus recursively using  either the default mode or the local search mode.

The result of this is a division of the cluster $C$  into two non-empty sets $A_1$ and $A_2$.
If $A_1$ has positive modularity and is $k$-valid, then we add $A_1$ to $B$ (the bin we have created), and similarly for
subset $A_2$.
If neither $A_1$ nor $A_2$  is added to   $B$, add cluster $C$ to $B$ and delete $C$ from the network, effectively removing it from further consideration by Recursive Graclus.

The stopping condition for Recursive Graclus is that all nodes have been deleted from the network.
When the stopping condition is reached, the final output of Recursive Graclus is the set of clusters in bin $B$.
Though all clusters that are produced by Recursive Graclus are guaranteed to be $k$-valid  and have positive modularity, they may not
be  $m$-valid since this requires that the core node subsets be connected. 

\paragraph{Iterative Graclus} To use Graclus iteratively, we follow a similar procedure as in Recursive Graclus but with two key differences. The first is that the user provides a parameter for the number of iterations, so that the procedure must stop after that number of iterations, if it hasn't already stopped. The second difference is that, in contrast to Recursive Graclus,  the procedure is guaranteed to produce clusters that are $k$-valid and $m$-valid in each iteration, as we now describe.  

When we apply Graclus, in either default or local search mode, to split a cluster into two sub-clusters, each of the created sub-clusters is parsed into core and non-core nodes (using a variant of the algorithm described for Stage 4, see Section \ref{sec:kmp-other}). 
If the core node set is empty, the sub-cluster will be discarded.  However, if the core is non-empty then the core node set is by definition $k$-valid, and is then evaluated further. Each core node set is divided into its connected components, and each connected component that has positive modularity is passed to the next iteration.
Any cluster that does not end up producing a sub-cluster that is passed to the next iteration is added to the bin $B$ as in Recursive Graclus and is then deleted from the network.  Iterative Graclus stops when one of two conditions occurs: the number of allowed iterations has been completed, or all the nodes have been deleted from the network. 
By design, the output of Iterative Graclus is a set of clusters each of which is $k$-valid and $m$-valid (thus, every cluster is connected, has positive modularity, and every node is adjacent to at least $k$ nodes in its cluster).

\subsubsection{Stage 3: Augmentation} The purpose of the augmentation step is to assemble the periphery of center-periphery structures. 
The input to Stage 3 is a set of clusters, so that each non-singleton cluster is $k$-valid and $m$-valid. 
Here we allow all nodes that are not in any non-singleton cluster to be added to some cluster as long as it is adjacent to at least $p$ core nodes in the (single) cluster to which it is added. In this study, we set $p = 2$ to ensure that we captured publications that 
are linked by co-citation  or bibliographic coupling to core nodes in a community.  If no such cluster exists such that a node can be added to it in a $p$-valid manner, the node remains unclustered. 

We add $x$ to the  cluster $C$ that maximizes $$\frac{N_C(x)}{|C|}$$ where $N_C(x)$ is the number of core node neighbors of node x in cluster $C$ and where $|C|$ denotes the number of nodes in cluster $C$. 
In other words, we add node $x$ to the cluster where $x$ has proportionally the most core node neighbors.

As an example, suppose $C_1$ and $C_2$ are clusters of core nodes and that  $x$ is not yet added as a non-core node to any cluster.
Suppose $x$ has 5 neighbors in cluster $C_1$ and 10 nodes in cluster $C_2$, where $|C_1|=1000$ and $|C_2|=20$. This procedure would add $x$ as a non-core
member to $C_2$ since 50\% of the nodes in $C_2$ are neighbors of $x$ while only $5/1000 = 0.5\%$ of the nodes in $C_1$ are neighbors of $x$. 

\subsubsection{Stage 4: Parsing clusterings to produce $kmp$-valid clusters} 

Although Stage 2 is guaranteed to produce $k$-valid clusters, it does not always produce $m$-valid clusters. 
Furthermore, the impact of Stage 3 (the augmentation step) is to add nodes to clusters that can participate as
non-core nodes, and
it is  possible for a node added to a cluster in Stage 3 to have sufficient neighbors in its cluster
to qualify for core membership. 
Hence, the result of these three stages is  a set of clusters that needs to be
``parsed" in order to know which nodes are core members, which nodes are non-core members, and whether
the clusters are $kmp$-valid (as defined in Definition \ref{def:valid}).

\paragraph{Parsing and modifying a single cluster}

\noindent
Here we describe how we perform this parsing on a given
cluster $C$ (taken from a clustering  $\mathcal{C}$) and values for $k$ and $p$. 
 \begin{itemize}
 \item 
Step 1: We label every node in the cluster $C$ using the k-core labelling algorithm, applied only to the subnetwork defined by $C$.
We let $C'$ denote the subset of nodes in $C$, each of whose labels is at least $k$, and we put the remaining nodes
 into a bin $B(C)$.
 \item 
 Step 2:
We compute the connected components of $C'$, and delete the components that do not have
positive modularity; the retained components thus have positive modularity, and are referred to as $C$-derived clusters (to indicate their derivation from the original cluster $C$).
The clusters produced in this step will be the core node members in the final clusters we produce 
at the end of Step 3.
\item  
Step 3: We augment the $C$-derived clusters using the bin $B(C)$ (i.e., we find non-core members to add) as follows.
We examine each node $x$ in bin $B(C)$ to see if it has at least $p$ neighbors in at least one
$C$-derived cluster; 
if so,  we select the best $C$-derived cluster  $A$ (using the algorithm from Stage 3) 
and add the node $x$ to $A_{nc}$ (where ``nc" refers to ``non-core").
After all nodes are examined and processed, we let 
 $A'=A \cup A_{nc}$ (for each $C$-derived cluster $A$) and output
 the set of all such clusters $A'$ as the ``final
clusters" derived from cluster $C$.
Note that this process indicates the
parsing of each final cluster $A'$ into core ($A$) and non-core ($A_{nc}$) nodes.
\item 
Step 4: We return all the final  clusters derived from $C$.
\end{itemize}

\begin{theorem}
For any clustering $\mathcal{C}$ of a network $N$, and any positive integers for $k$ and $p$ (with $p < k$),
the output of $kmp$-processing is a clustering that is $kmp$-valid.
Therefore, the output of the four-stage clustering method is $kmp$-valid for all networks $N$ and values for $k$ and $p$.
\label{theorem}
\end{theorem}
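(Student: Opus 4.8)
\noindent The plan is to prove the first sentence by a cluster-by-cluster analysis of Stage~4, and then obtain the second sentence as an immediate corollary. It suffices to fix an arbitrary input cluster $C$ handed to $kmp$-processing and to show that every output cluster $A' = A \cup A_{nc}$ it produces is $kmp$-valid, where the parsing designates the $C$-derived cluster $A$ as the core set and $A_{nc}$ as the non-core set. Ranging over all $C \in \mathcal{C}$ then shows the entire output clustering is $kmp$-valid, under the standing convention (already used in Stages~1--3) that the conditions of Definition~\ref{def:valid} are required only of non-singleton clusters, so that unclustered nodes and the nodes in low-modularity components discarded in Step~2 are exempt. The second sentence follows with no extra work: the four-stage method always terminates with Stage~4, so whatever clustering Stages~1--3 produce, feeding it to $kmp$-processing yields a $kmp$-valid clustering, for every $N$, $k$, and $p$.

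\noindent I would verify the three conditions of Definition~\ref{def:valid} separately, against the single fixed core/non-core partition of $A'$ assigned by the parsing. The $m$- and $p$-conditions follow almost directly from the construction. For $m$-validity, Step~2 retains a connected component of $C'$ as a $C$-derived cluster exactly when it has positive modularity (Definition~\ref{def:modularity}, computed with respect to $N$); hence the core set $A$ is connected with positive modularity, and since Step~3 never alters $A$, the core sub-cluster of $A'$ is $m$-valid. For $p$-validity, Step~3 admits a node $x \in B(C)$ into $A_{nc}$ only after confirming that $x$ has at least $p$ neighbors inside the chosen $C$-derived cluster $A$; as $A$ consists entirely of core nodes, every non-core node of $A'$ is adjacent to at least $p$ core nodes of $A'$, and an empty $A_{nc}$ is vacuously $p$-valid.

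\noindent The step I expect to be the crux is $k$-validity, since it is the only condition sensitive to the decomposition of $C'$ into connected components. The key is that $C'$---the set of nodes of $C$ whose $k$-core label in the subnetwork induced by $C$ is at least $k$---is precisely the $k$-core of that subnetwork; by the defining property of the $k$-core, every node $x \in C'$ has at least $k$ neighbors inside $C'$. I would then observe that passing to a connected component $A$ of $C'$ preserves this: because $A$ is a connected component of the graph induced on $C'$, no edge of that graph leaves $A$, so each of the (at least $k$) $C'$-neighbors of any $x \in A$ already lies in $A$. Hence every core node of $A$ has at least $k$ core neighbors within $A$, and since Step~3 adds only non-core nodes and deletes no core--core edges, each core node still has at least $k$ core neighbors in $A'$.

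\noindent The remainder is bookkeeping I would check explicitly: that modularity is evaluated consistently with respect to $N$ in both Definition~\ref{def:modularity} and Step~2; and that the order in which nodes of $B(C)$ are processed in Step~3 is irrelevant, since each is tested against the already-fixed core sets. I would also pin down where the hypothesis $p < k$ is actually used: each of the three conditions above is verified independently of it, so $p < k$ appears to encode the intended modeling relationship (periphery less tightly connected than core) rather than a logical necessity for $kmp$-validity, and I would confirm this reading rather than silently rely on it.
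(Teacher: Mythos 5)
Your proof is correct, but it takes a genuinely more direct route than the paper's. You verify the three conditions of Definition~\ref{def:valid} against the core/non-core partition that the parsing itself designates (core $=A$, non-core $=A_{nc}$): $k$-validity because $C'$ is the maximal subgraph of $C$ with minimum degree at least $k$ and a connected component retains all of its $C'$-neighbors; $m$-validity from the modularity filter in Step~2; $p$-validity from the admission test in Step~3. Under the literal reading of Definition~\ref{def:valid} (validity relative to a given partition into core and non-core nodes), this is all the theorem requires. The paper's proof shares these checks but makes its crux a lemma you never need: that re-applying $kmp$-parsing to a final cluster $A'=A\cup A_{nc}$ cannot promote any node of $A_{nc}$ to core. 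The paper proves this by monotonicity of $k$-core labels under subgraph containment: a node $x\in A_{nc}$ came from bin $B(C)$, so its label in $C$ is some $L_1<k$, and since $A'\subseteq C$ its label in $A'$ satisfies $L_2\le L_1<k$. What that extra lemma buys is that the core/non-core split is intrinsic rather than designated by fiat --- the parsing is a fixed point of itself, so no output ``non-core'' node secretly satisfies the core criterion within its own cluster; what your route buys is brevity and a cleaner separation of the three validity conditions. Your closing remark about $p<k$ is also right, and it can be sharpened using the paper's own monotonicity argument: if $p\ge k$, then any node of $B(C)$ with at least $p\ge k$ neighbors in a $C$-derived cluster $A$ would, together with $A$, form a subgraph of $C$ of minimum degree at least $k$, contradicting its label in $C$ being below $k$; hence the augmentation would add nothing and the theorem would hold with $A_{nc}=\emptyset$, confirming that $p<k$ is a modeling choice rather than a logical hypothesis of the proof.
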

\begin{proof}
Let  $\mathcal{C}$ be an arbitrary clustering. Hence, some of its clusters
may not be $k$-valid, may not be connected, and may not have positive modularity.
We will prove that after the $kmp$-processing, all the clusters are $kmp$-valid.
Specifically, we will prove that the parsing produced in Step 3 into core and non-core
satisfies $kmp$-validity.

First, note that by construction the clusters (referred to as $C'$) that are produced in Step 1 have the property that every node 
in these clusters is adjacent to at least $k$ other nodes in their cluster.
Hence, treating each cluster as only containing core nodes, these clusters
are $k$-valid.
In Step 2, these clusters are divided into components and the components are retained only if they have positive modularity; hence
the $C$-derived clusters  that are produced are $k$-valid and $m$-valid, under the interpretation that they contain only core nodes.
In Step 3, the $C$-derived clusters are augmented. This augmentation step maintains
connectivity, and so the final clusters are connected. 
It remains to establish that after parsing  any final cluster into core and non-core nodes, the final cluster would 
be $k$-valid (i.e., every core node would be adjacent to at least $k$ other core nodes), $p$-valid
(i.e., every non-core node would be adjacent to at least $p$ core nodes), and the core node subcluster would have positive modularity, and hence also be $m$-valid.

Let $A'$ be some final cluster.  By construction, it is formed by taking a $C$-derived cluster $A$ produced in Step 2, and then augmenting it.
Thus, $A' = A \cup A_{nc}$, where $A_{nc}$ is the set of nodes that are added during the augmentation step.
We will establish that applying Stage 4 $kmp$-parsing to this cluster $A'$ will not change its
decomposition into core and non-core (i.e., $A$ will still be the core nodes and $A_{nc}$ will be
the non-core nodes). 
Note that by construction, all the nodes in  $A$ are adjacent to at least $k$ other nodes in $A$.
Hence, when $A'$ is $kmp$-parsed, the nodes that are identified as core nodes will contain all the nodes in $A$ and then possibly
some nodes from  $A_{nc}$. 
Independent of whether there are new core nodes, $A'$ will be $p$-valid.
If there are no new core nodes, therefore, then $A'$ will be $kmp$-valid. 
Here we show that in fact no node in $A_{nc}$ will be labelled as core, and so
there are no new core nodes. 

Let $x \in A_{nc}$ with $A$ a $C$-derived cluster.
Hence, $x$ is drawn from bin $B(C)$.
By Step 1, 
the label assigned to $x$ during Step 1 (when the nodes in cluster $C$ were labelled) was a value $L_1 $ that is strictly less than $k$.  
Since $A$ is a $C$-derived cluster, $A'\subseteq C$. 
Consider the label $L_2$ assigned to $x$ by the k-core labelling of $A'$.  
Since $A' \subseteq C$, it follows that $L_2 \leq L_1$. Since
$L_1 < k$, it follows that $L_2 < k$. 
Hence, $x$ will not be placed in the core for $A'$, and the final clusters output in Step 4 are therefore $kmp$-valid. 
\end{proof}

\subsubsection{Additional uses of $kmp$-parsing}
\label{sec:kmp-other}

The Stage 4 $kmp$-parsing routine is also used in modified forms in other aspects of this study:
\begin{itemize}
\item Strict $kmp$-parsing: 
This strict parsing routine  evaluates each cluster in a clustering for being kmp-valid: those clusters that
are $kmp$-valid are retained and all others are discarded.  We use this strict parsing routine to evaluate Leiden. 
\item Using $kmp$-parsing to extract core node clusters: 
We also have a variant where use the parsing to extract only the core nodes within each cluster, and then return the components of the core node  sub-clusters that have positive modularity score; this variant is used within Iterative Graclus.
\end{itemize}

\subsection{Multidimensional Scaling Analysis} We used Multidimensional Scaling analysis (MDS) to visualize clusters of marker nodes. 
For the distance between marker nodes $x$ and $y$, we calculated the number of clusterings in which $x$ and $y$ were not in the same cluster.
 Using the matrix of pairwise distances, we produced a 2-dimensional visualization using metric MDS.  See supplementary materials for additional details. 

\section{Results \& Discussion} 

 \subsection{Properties of the citation network} Exosomes are an area of investigation within the larger field of extracellular vesicles in biology \citep{raposo2021,kalluri2020}. This field has been exponentially expanding, as evidenced by a keyword search for ``exosome" in the Dimensions bibliography yielding a count of less than 100 publications in 1990 and earlier, 11,100 publications from 1991 through 2010, and 115,300 publications from 2011-2021 (rounded). 
 To find communities in this  research area, we constructed and analyzed a large citation network that captured articles concerning exosomes and extracellular vesicles. 

 The citation network we built consisted of 14,695,475 publications in 13 components, of which the largest component accounts for 99.998\% of the network. The network was generated  through amplifying a seed set of 11,156 articles (Materials and Methods). 
 The degree distribution of the nodes in this network is typical of citation networks with a few nodes of high degree and many nodes of low degree (Figure \ref{fig:degree_distribution}).
 Roughly 68\% of the nodes have degree at most five and the 90th, 95th, and 99th percentiles of degree counts are 6113, 9186, and 24510.
 The highest degree is 256,836 and corresponds  to an article describing an assay for protein measurement.

The nodes in this dataset were roughly distributed by year of publication as follows-- 1990 or earlier (1.17 million), 1991-2010 (6.05 million), and 2011-2021 (6.99 million), suggesting not only a rapid growth of exosome publications in the post-1990 period but also a substantial increase in publications linked by citation to the seed articles. 

\begin{figure}[H]
\centering
\includegraphics[width=0.6\linewidth]{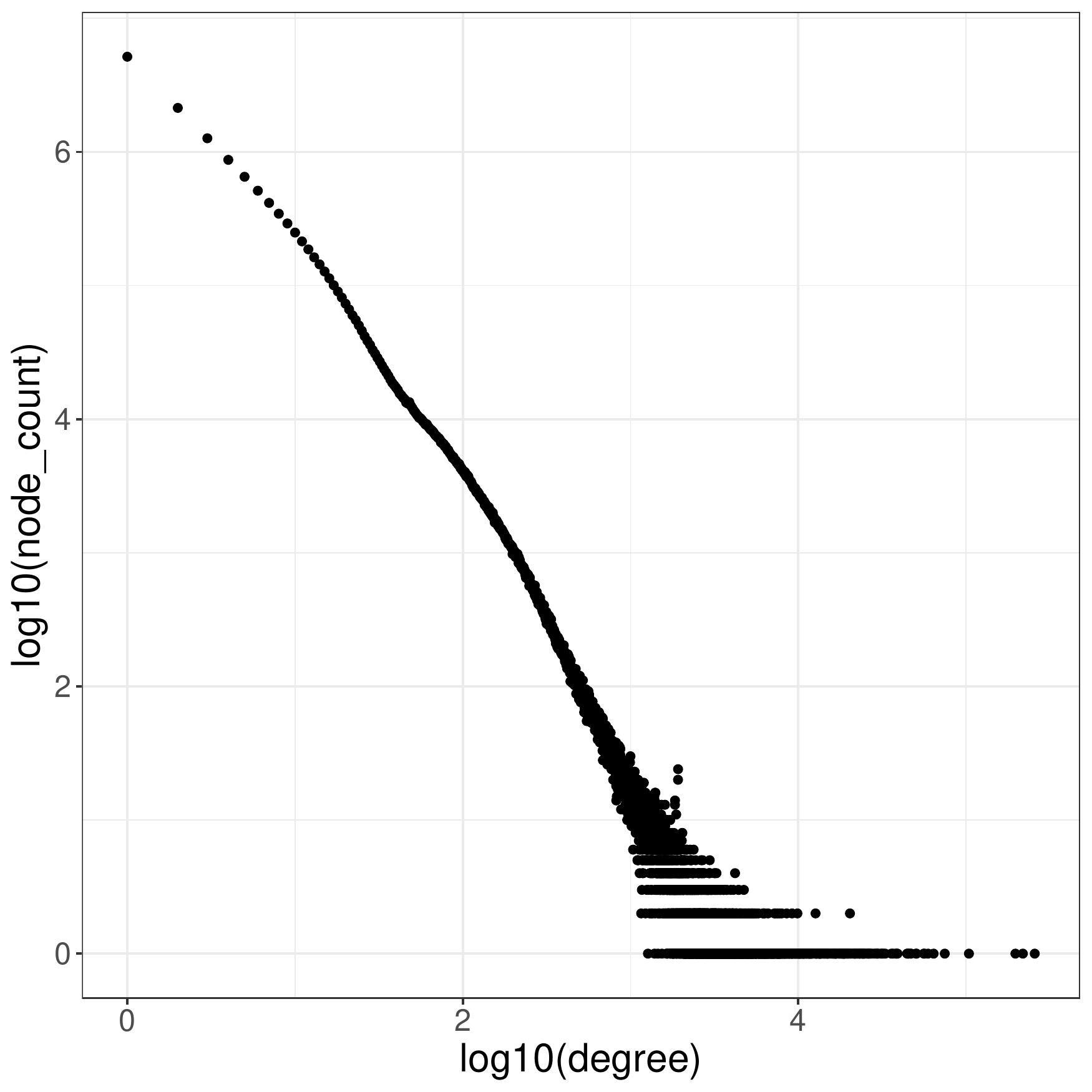}
\caption{Intra-network degree distribution of the 14,695,475 nodes in the exosome citation network. \emph{x-axis:} log degree, \emph{y-axis:} log node count.}
\label{fig:degree_distribution}
\end{figure}

\begin{figure}[H]
\centering
\includegraphics[width=0.7\linewidth]{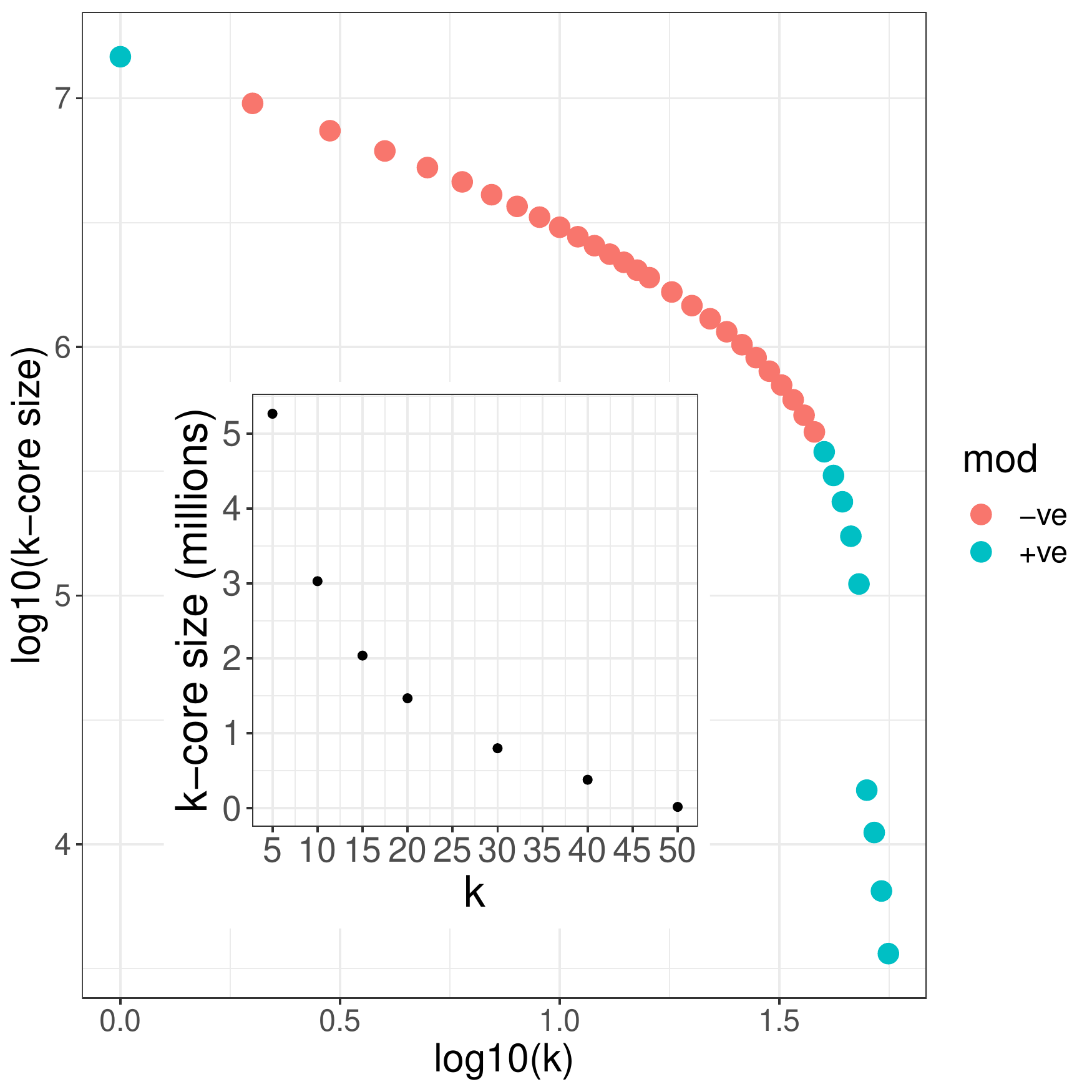}
\caption{Cluster sizes using the simple $k$-core clustering algorithm. The input exosome citation network (Materials and Methods) consists of 13 components summing to 14,695,475 nodes. Of these, a single component accounts for 14,695,226 articles (99.998\% of the network).  The $k$-core clustering algorithm was applied to the exosome citation network for multiple values of $k$ \emph{(x-axis)}. The size of the $k$-core is shown on the ordinate \emph{(y-axis)}.  
A single component is returned in each case, with the exception of $k$=1 and $k$=2. Node coverage decreases as $k$ increases, and is 
approximately 36\% and 21\% when $k=5$ and $k=10$ respectively.  The single cluster at $k$=56 has 3,630 nodes, amounting to 
node coverage of 0.02\%.  At $k$=57 or greater, the entire network dissolves, thus the degeneracy of the network is 56.  Clusters shown 
have positive modularity (\emph{mod +ve}) when $k=1$ or $k >=40$ and are colored teal.  By definition, the connected components of the network are the $1$-cores so 13 clusters are returned for $k=1$.  For $k=2$, only two clusters are returned. Clusters of size 100 or less (0.001\% of the network) are not included in this plot and pertain only to $k$=1 or $k$=2.
 \emph{Inset:} $k$-core sizes sampled at intervals of 5 are displayed using a linear scale.}
\label{fig:kcore}
\end{figure}

\subsection{Results of clustering methods}
We now present results using different clustering methods, including the $k$-core clustering method, Iterative $k$-core clustering, our four-stage pipeline, and the Leiden algorithm. As noted earlier, we were explicitly interested in discovering citation-dense regions with center-periphery structure that reflected cohesiveness and collaboration \citep{Giatsidis2011,Breiger2014}. In this case, collaboration refers to the recognition of prior work by others in the community through citations. We were not interested in communities that consisted of a single heavily cited article or a single article that cited many references \citep[p.~193]{Chandrasekharan2021}. 

\subsubsection{Results for the Leiden algorithm} We first used the Leiden algorithm, which guarantees connected communities \citep{Traag2019}, as a benchmark for community finding. We consider Leiden as a reference method in the scientometrics community, and  have previously used it \citep{Chandrasekharan2021} in combination with Markov Clustering \citep{VanDongen2008} to detect communities in the immunology and ecology literature. 
In the present study, we used the Leiden software~\citep{leiden-github} with the Constant Potts Model as  quality function, and with varying resolution factors to cluster the exosome citation network. The resolution factor is designed to modify the clustering, as it determines  the required minimum density within communities. 

For this citation network,  at most of the resolution values that were tested,  the Leiden algorithm generated a relatively large number of small clusters (Table \ref{tab:leiden}).
At the smallest resolution factor value we examined (0.05), Leiden produced 488,285 non-singleton clusters and 6,323,695 singletons and the largest cluster comprised 960 nodes. Node coverage, defined as the ratio of nodes in non-singleton clusters to the total number of nodes in the network was 57\% with resolution factor set to 0.05.  As we increased the resolution value,  the number of singleton clusters increased, the sizes of non-singleton clusters decreased, and we observed a progressive drop in node coverage (down to 16\% at resolution factor 0.25 and to 9\% for resolution factor 0.50).  

\begin{table}[ht]
\centering
\begin{tabular}{c|c c r c cc }
    \hline
    Resolution & Node Coverage & \# Clusters & \# Singletons & Min &  Median & Max  \\
    \hline
    0.05 & 57\% & 488,285 & 6,323,695 & 2 & 17 & 960 \\
    0.25 & 16\% & 481,780 & 12,408,446 & 2 & 4 & 192 \\
    0.50 & 9\% & 434,973 & 13,412,183 & 2 & 2 & 97 \\
    0.75 & 8\% & 489,937 & 13,496,111 & 2 & 2 & 39 \\
    0.95 & 8\% & 497,757 & 13,499,132 & 2 & 2 & 16 \\
    \hline
\end{tabular}
\caption{\textbf{Cluster statistics for Leiden under different resolution values.}  Node coverage, expressed as percentage, refers to the the ratio of nodes in non-singleton clusters to the total number of nodes in the network. \# of clusters refers to the number of non-singleton clusters. The last three columns refer to the sizes of non-singleton clusters. Results shown here are before $kmp$-processing.}
\label{tab:leiden}
\end{table}

We also screened the clusters generated by Leiden for $kmp$-validity (Materials and Methods, Definition \ref{def:valid}) at $k$=5 or 10 and $p$ = 2, the values that we use in when applying our pipeline. 
For the resolution  value of 0.05, the node coverage 
is 3.16\% and 2.54\% when $k$ is 5 or 10 respectively (Supplementary Materials, Tables 3 and 4), a large drop from 57\% for the
Leiden clustering before restriction to $kmp$-valid clusters.
For larger resolution values, restriction to $kmp$-valid clusters results in even smaller node coverage.
In addition, again using resolution value $0.05$,  after restriction to $kmp$-valid clusters, the number of clusters is much smaller: only 4,076 for $k=5$ and 2,320 for $k=10$, a large drop from 488,285. 
Thus, restriction to $kmp$-valid clusters greatly reduces the node coverage and number of clusters compared to Leiden before
restriction, and this holds for all resolution values.
Moreover, this analysis shows that  only a small fraction of the clusters produced by Leiden at any resolution value  are $kmp$-valid (e.g.,
less than 1\% for resolution value 0.05).

In summary, while Leiden was able to efficiently cluster our network into a large number of small communities that represented between 8\% and 57\% of the network depending on the resolution factor employed,  only a small fraction of the communities exhibited $kmp$-validity. This is not surprising since the Leiden algorithm and optimization criterion was not designed to produce $kmp$-valid communities, and these trends indicated that Leiden had limited utility for our purposes.

\subsubsection{Results for $k$-core clustering} 
To identify citation-dense regions of the network, we ran the $k$-core clustering method on our network for different values of $k \geq 1$ (Figure \ref{fig:kcore}).  
The largest value for $k$ for which a cluster is returned is $56$,  indicating that the degeneracy of the network is $56$.
The network has no isolated vertices and so $k=0$ and $k=1$ produce the same output, which is 13 clusters, 
each corresponding to a connected component in the network.
When $k=2$, two clusters were returned. 
For each $3< k \leq 56$, only a single cluster was returned, hence only $k=0,1,2$ produced more than one cluster, and in each of these cases, a single cluster dominated in size.

An interesting trend in this analysis is how $k$ impacts   the modularity scores of the clusters.  
By definition, the components in the graph all have positive modularity, and so
when $k=1$ the clusters all have positive modularity. For larger values of $k$, the modularity scores do not become positive until $k=40$, and then all subsequent values of $k$ produce clusters with  positive modularity scores.

Cluster size decreased monotonically  as $k$ increased to $56$, however we observe a pattern of relatively stable core sizes at lower values of $k$  followed by a more rapid decrease as $k$ increases above 40.  
 \cite{Leskovec_2008} reported similar findings about changes in core sizes on a much larger network of instant messaging data that consisted of 180,000,000 nodes. These authors suggest that the rapid decrease in core size occurs once nodes on the fringe of the network are removed.
On our network, this more rapid decrease of cores sizes also coincides with the appearance of positive modularity of clusters; modularity was not 
reported in \cite{Leskovec_2008}. Thus, we are mainly interested in the $k$-cores for large values of $k$:   they have positive modularity, small changes in $k$ result in large changes to their sizes, and they have been identified in the prior literature as what is left after the fringe of the network is removed. 

While the simple $k$-core clustering method identifies citation-dense areas in the network suggesting cohesiveness and collaboration, it has two limitations: it does not ensure positive modularity for every cluster it produces, and, on our data, for every $k \geq 3$ it produced only a single large cluster. We note that for $k=0,1,2$, it produced  two or more clusters, one of which was very large.  These limitations impact the ability to find multiple communities of interest in the exosome literature,  especially considering the possibility that some of the communities of interest could be contained within larger clusters with non-positive modularity. 

\subsubsection{Results for Iterative k-core (IKC)} 
We designed 
Iterative $k$-core (IKC) to improve on the $k$-core clustering algorithm.
The input to IKC is
 the network $N$ and the parameter $k$.
The first cluster that is found in the network is the $L$-core where $L$ is the largest label
assigned to any node. If $L \geq k$, then the $L$-core is produced as a cluster and  removed from the network, and otherwise
the algorithm stops.
If the algorithm has not stopped, it is run recursively on the reduced network.
Therefore, IKC($k$) will contain all the clusters in IKC($k'$) for $k \leq k'$.

We explored IKC varying $k$ between $5$ and $50$, and examined the distribution of cluster sizes generated. We also recorded the minimum degree in each cluster. Since IKC clusters  satisfy $k$-validity, the nodes in the clusters are all ``core'' members, and so this minimum degree is also the Minimum Core Degree (MCD) of the cluster.  
As expected, increasing $k$ results in decreases in cluster size and increases in the MCD value (Figure \ref{fig:ikc}). 
In order to include as much of the network as is reasonable and still have sufficient density to define community structure, we selected  $k=5$ and $k=10$ for IKC.

While IKC discovered $km$-valid communities that were trivially $p$-valid, it too has limitations. It identifies only core nodes with modest node coverage of 7.38\% and 4.22\% at $k$=5 and $k=10$,  respectively. 
It also generates some large clusters with lower MCD values that leave open the question of whether denser $kmp$-valid communities exist within them.

\begin{figure}[H]
\centering
\includegraphics[width=0.7\linewidth]{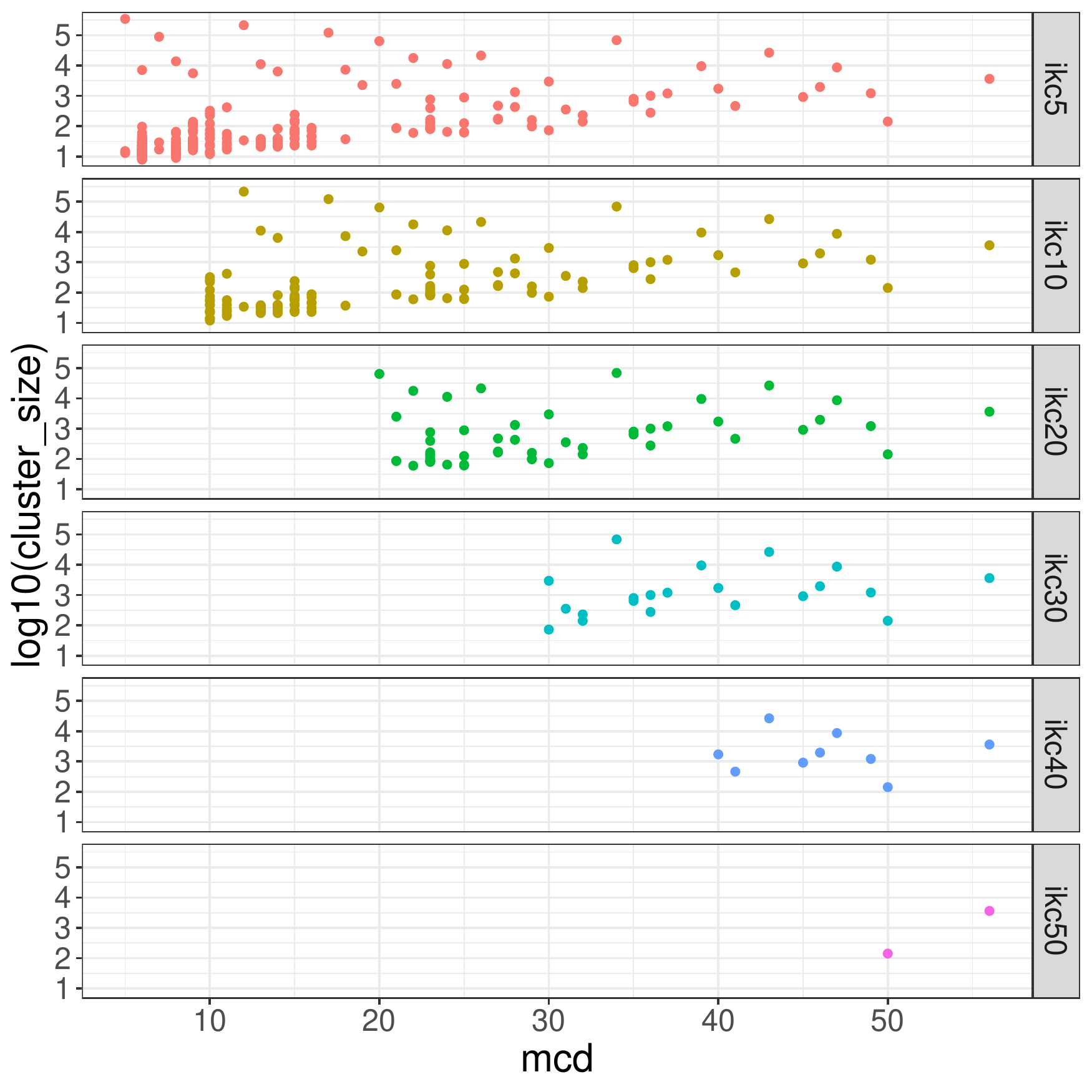}
\caption{Empirical statistics of the Iterative k-core (IKC) clustering methods (varying $k$) on the Exosome network. The y-axis shows the cluster sizes (logarithmic scale) and the x-axis shows the Minimum Core Degree (MCD) values, where the MCD of a cluster is the minimum degree of any node in the cluster. By design,  IKC($k$) contains all the clusters of IKC($k'$) if $k \leq k'$; thus, the panels showing IKC at lower values of $k$ contain greater numbers of clusters.}
\label{fig:ikc}
\end{figure}

\subsubsection{Results for the Four-Stage Clustering Pipeline}

\begin{figure}[H]
\centering
\includegraphics[width=0.9\linewidth]{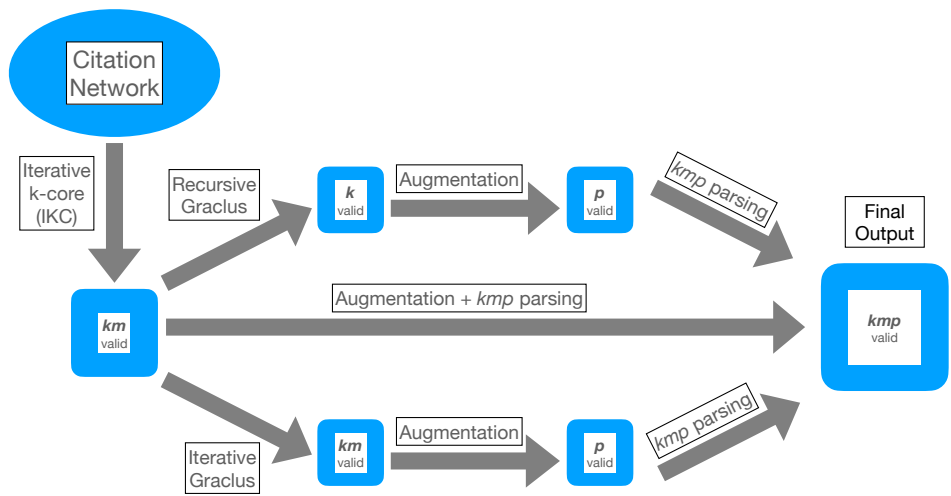}
\caption{Our four-stage clustering pipeline takes as input a citation network and produces clusters based on  values selected for parameters $k$ and $p$. 
Boxes adjacent to edges indicate stages in the pipeline; boxes with blue borders indicate tests that are performed to determine which clusters are passed to the next stage.
In Stage 1, it runs the Iterative k-core (IKC) algorithm for the selected value of $k$; clusters that are $km$-valid are then passed to the next stage.  Stage 2 (optional) breaks the clusters from the first stage into smaller clusters, using either recursive Graclus or Iterative Graclus (and in each case, using either the default version or a heuristic search version). Clusters that pass the required validity check
 ($k$-validity for Recursive Graclus and $km$-validity for Iterative Graclus) are then passed to the next stage.  These clusters are then enlarged with additional nodes  in Stage 3, the ``Augmentation" step. All clusters produced are $p$-valid at this point, and are passed to Stage 4, $kmp$-parsing, which produces a set of $kmp$-valid clusters, each of which is parsed into their core subcluster and non-core subcluster.   }
\label{fig:flow_chart2}
\end{figure}

To address the limitations of IKC, we designed a four-stage pipeline (Figure \ref{fig:flow_chart2}). This four-stage pipeline is guaranteed to produce a $kmp$-valid clustering of the network, for user-provided values of $k$ and $p$. 

In Stage 1 we use IKC with $k=5$ and $k=10$. 
Stage 2 breaks the clusters found in Stage 1 into smaller clusters; for this stage we apply either Recursive Graclus or Iterative Graclus, and for each of these we use Graclus either in default mode or in local search mode. 
Thus, for each setting of $k$ we have four versions of the four-stage pipeline: Iterative and Recursive Graclus run in either default or local search mode. In Stage 3, the clusters produced by Stage 2 are  augmented by the addition of nodes that satisfy  $p$-validity for  $p=2$.
In Stage 4, we parse the output of Stage 3 and retain only those clusters that are $kmp$-valid.

\begin{table}[h]
\centering
\begin{tabular}{lc|c r r r r r }
    \hline
    Stage 2  &  Stage 3 & Node Coverage &  Clusters &  Singletons & Min &  Median & Max  \\
     &  & &(number)  & (number)  & & & \\ 
    \hline \\
    {\bf $k=5$ }\\
    No & No & 7.38\% &  276  & 13,611,485 & 8 & 28.0 & 345,139 \\
    No & Yes & 36.63\% & 276 & 9,312,583 & 15 & 265.5 & 856,623 \\
    IG(0) & Yes &  20.01\% & 13,709 & 11,752,582 & 6 & 118.0 & 19,934 \\
    IG(2000) & Yes & 20.84\% & 9,698 & 11,632,959 & 6 & 106.0 & 32,487 \\
    RG(0) & Yes & 26.27\% & 2,261 & 10,835,596 & 10 & 145.0 & 579,720 \\
    RG(2000) & Yes & 26.09\% & 3,417 & 10,861,670 & 11 & 105.0 & 578,265 \\
    \hline \\
{\bf  $k=10$}\\
No & No & 4.22\% & 119  & 14,075,787 & 12 & 85.0 & 213,670 \\
	No & Yes &  33.69\% & 119  & 9,744,368 & 62 & 1638.0 & 964,503 \\
    IG(0) & Yes & 18.51\% & 4,185 & 11,975,796 & 33 & 427.0 & 27,007 \\
    IG(2000) & Yes & 20.00\% & 3,044 & 11,757,196 & 31 & 488.5 & 60,189 \\
    RG(0) & Yes &  27.26\% & 359 & 10,689,730 & 67 & 1014.0 & 679,922 \\
    RG(2000) & Yes &  27.61\% & 473 & 10,637,388 & 55 & 761.0 & 620,491 \\
    \hline
		 
\end{tabular}
\caption{\textbf{Cluster statistics for 12 variants of Four-Stage clustering.} All results include Stages 1 and 4, 
 but some pipelines do not use Stages 2 or 3; all clusterings are $kmp$-valid. 
Top: results for $k=5$, Bottom: results for $k=10$.  
Stage 2 is performed either using Iterative Graclus (IG) or Recursive Graclus (RG), which are each run with either $0$ or $2000$  local search iterations.  Node coverage refers to the percentage of network nodes contained in non-singleton clusters and singletons refers to the number of nodes in singleton clusters. All other statistics refer to non-singleton clusters, with the last   three columns refering to the sizes of non-singleton clusters. 
Specific noteworthy trends include:  (a) all clusterings that use Stage 3 have node coverage above 18\%, (b) all clusterings have at least one very large cluster, (c) Stage 2 choice impacts maximum cluster size, and (d) setting $k=5$ produces more clusters with a smaller median cluster size than setting $k=10$.}
\label{tab:ig-rg-cluster-stats}
\end{table}

We show results for the different versions of this 
four-stage pipeline in Table \ref{tab:ig-rg-cluster-stats}, where the 
rows  correspond to different ways of setting $k$ and running Stages 2 and 3.
Using IKC alone without Stages 2 and 3 
resulted in   276 and 119 non-singleton clusters and had total node coverage of 7.38\% and 4.22\%, with maximum cluster sizes of 345,139 and 213,670, for $k=5$ and $k=10$, respectively. 
Skipping Stage 2 (breaking down clusters) but adding Stage 3 (augmentation) greatly increases the node coverage to at least 33\% for both settings of $k$ but also increases the maximum cluster size  to 856,623 and 964,503 for $k=5$ and $k=10$, respectively.
This approach also has large median cluster sizes, especially for $k=10$ where the median is 1638. 
Thus, using Stage 3 (augmentation)  but not also Stage 2 produces high node coverage and large cluster sizes.  

Using all four stages,  and hence using Stage 2, reduces the node coverage to values that range from 18.51\% to 27.61\% and also reduces the median and maximum 
cluster sizes, but the choice of how Stage 2 is run has a significant impact.  
Node coverage is higher when using Recursive Graclus rather than Iterative Graclus.
Using Recursive Graclus in default mode rather than local search mode
tends to produce a smaller number of non-singleton clusters that are also somewhat larger; for example, when $k=10$, default usage of Graclus that doesn't employ the local search strategy produces a median cluster size   of 1,014 as
opposed to  761  when using 2000 local search iterations.

The choice between $k=5$ and $k=10$ also impacts results, with
$k=10$ producing a much smaller number of non-singleton clusters that are substantially larger than the results for $k=5$. 
For example, using default Recursive Graclus at $k=10$ produces 359 non-singleton clusters
with median size 1014 while the same setting for $k=5$ produces 2,261 non-singleton clusters with median size 145.

\subsubsection{Comparison between different clustering methods}
We now compare the  different clustering outputs with respect to node coverage, 
number of non-singleton clusters, and the distribution of non-singleton cluster sizes.
As our discussion of the different variants of the four-stage pipeline revealed,  
how Stages 2 and 3 are run and the value for $k$  impact these statistics.
If node coverage is the most important criterion, then skipping Stage 2 but using Stage 3 is 
recommended. However, since these approaches produce very large clusters, 
including Stage 2 is more likely to be desirable.
Among the techniques that use Stage 2, Iterative Graclus tends to produce smaller clusters, but Recursive
Graclus produces higher node coverage; the choice between these should be made based on the 
specific question that is being addressed.    
Similarly, how $k$ is set should depend on the features of the citation network, and picking larger
values of $k$ may be suitable under some conditions.

A comparison to Leiden is also helpful: before restricting to the $kmp$-valid clusters,  
Leiden has very high node coverage (57\%)  for resolution value $0.05$, but after restricting to $kmp$-valid clusters, node coverage drops to 3.16\% and 2.54\% when k is 5 or 10, respectively. In contrast, the  four-stage pipeline has node coverage that varies from 18.51\% to 27.61\%, depending on $k$ and how Stage 2 is performed (Table  \ref{tab:ig-rg-cluster-stats}).
 
Thus, a potential user of this approach is presented with options that can be used to address contextual needs. For example, after considering features of the source data and the purpose of community finding, a user may choose Leiden, IKC, IKC with augmentation, or the complete pipeline with its $kmp$-parsing requirements.

\subsection{Marker Node Analysis} The network we constructed for this study has more than 14 million nodes. 
We assumed that some of the communities discovered would represent  areas of investigation peripheral to exosomes and extracellular vesicles. 
Consequently, we used a set of 1,218 independently selected articles from the extracellular vesicle literature, all of which are present in the network, as marker nodes (Materials and Methods) and used them to identify clusters of interest.  
 Any community containing at least one marker node was considered relevant; however, we were particularly interested in communities with high numbers of marker nodes.

For IKC clustering at either $k$=5 or $k$=10 (Figure \ref{fig:kcore}), two clusters contained 256 and 227 marker nodes respectively, and together accounted for approximately 40\% of the 1,218 markers. The first of these clusters, with 256 marker nodes, was the 56-core of the network, and comprised 3,630 nodes. 
The second, with 227 marker nodes, exhibited an MCD (minimum core degree) value of 12 and consisted of 213,670 nodes.
 To visualize the distribution of the entire set of marker nodes, we used a multidimensional scaling approach using the frequency of co-occurrence in clusters across 12 different clustering methods as the measure of similarity between publications. 
 Each of these two sets of marker nodes are found in dense and clearly defined clusters after multidimensional scaling (Figure \ref{fig:mds}).

\begin{figure}[H]
\centering
\includegraphics[width=0.9\linewidth]{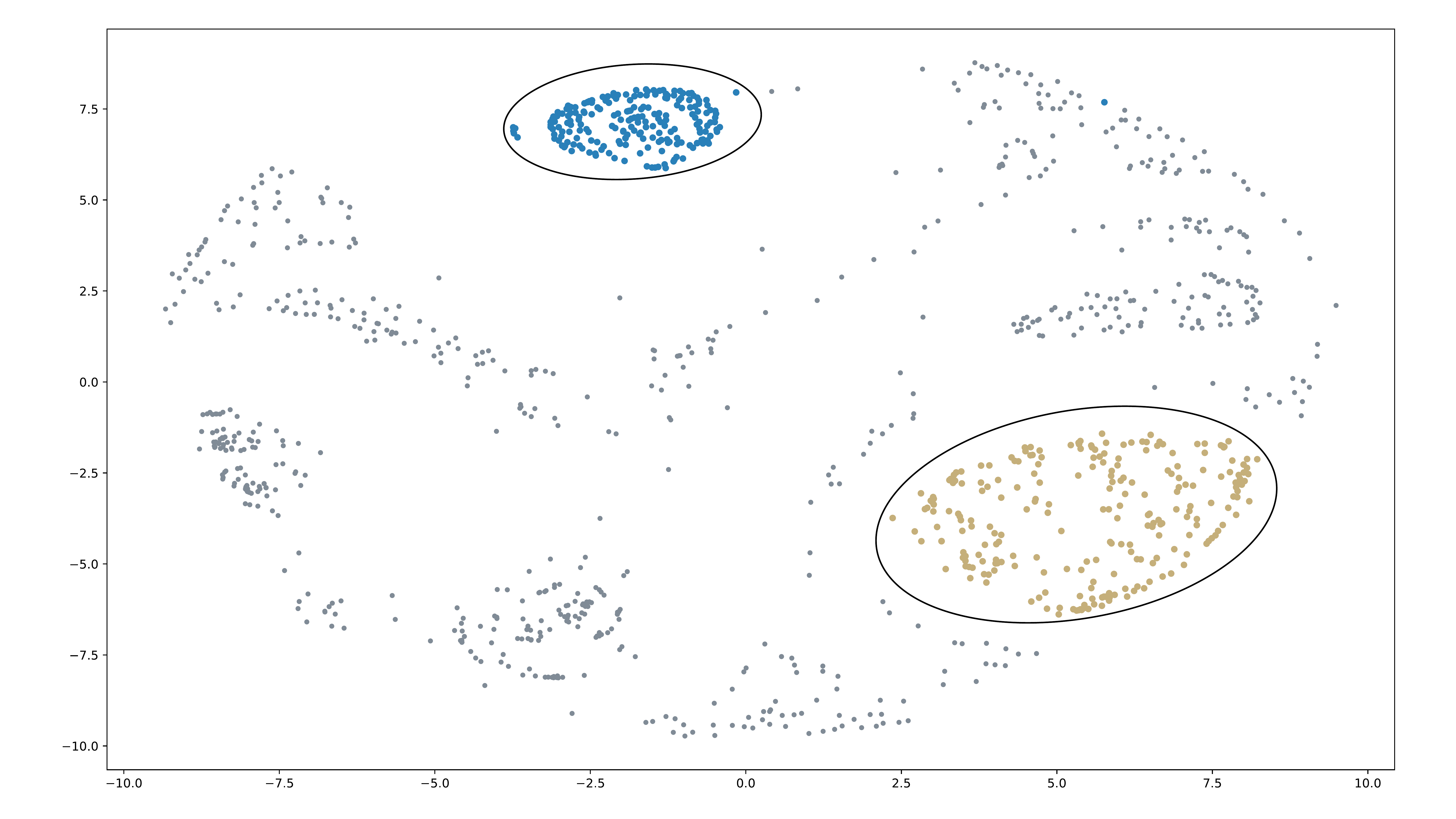}
\caption{Multidimensional scaling (MDS) of 1,218 marker nodes based on the frequency with which they are placed in the same cluster in 12 clustering outputs.  The circled beige cluster (bottom right) corresponds to the 256 markers found using either IKC(5) or IKC(10)  in a single cluster of minimum cluster degree (MCD) of  56, while the circled blue cluster (top) corresponds to the 227 markers found in a different cluster with an MCD value of 12. }
\label{fig:mds}
\end{figure}

 These two clusters were similar in having a large number of marker nodes but were otherwise different from each other with respect to MCD and size; therefore, we used the two sets of 256 and 227 markers as examples for further study. 

A second criterion considered was robustness to clustering method, which we measure by the frequency with which marker nodes were found co-located in the same cluster across the 12 clusterings we studied. 
We began this evaluation by first determining which of these nodes are always placed in non-singleton clusters for all 12 clustering methods.
We found that 27 of 256 and 35 of 227 marker nodes respectively were  always placed in non-singleton clusters, and studied these two sets further.

The first set (A) consisting of 27 marker nodes contained 17 articles and 8 reviews that were published between 2006 and 2019 with 23 of these published in 2010 or later. 
Based upon inspection of titles and journal, the contents of articles in  set A spanned basic cell biology, the role of exosomes in cancer, and exosome isolation methods with some variation in terms of being descriptive or mechanistic. 
The second set  (B) of 35 markers contained 26 articles and 9 reviews published between 2013 and 2021, largely focused on a basic and translational studies of exosomes in nervous tissue but also including a few articles on exosomes in pregnancy and exosome isolation methods. 

We then examined the publications in sets A and B for co-occurrence in the same cluster across all 12 clusterings.
We  found that 8 articles from set A were always found in the same cluster and 4 articles from set B were always found in the same cluster.
While the numbers of markers in these sets are small, they serve to identify a larger community, which can then be characterized further by other techniques, such as detailed scholarly examination or textual content analysis.

We identified the smallest cluster across the 12 clusterings that contained the 8 articles from A; the selected cluster  (Cluster 1) consisted of 73 articles and was focused on extracellular vesicles in cancer. We performed a similar analysis for the 4 articles from B, and found a cluster (Cluster 2) of 145 articles that was focused on extracellular vesicles in the nervous system. 
Thus, both these clusters were relatively homogenous with respect to article content, one focused on extracellular vesicles in cancer and the other on extracellular vesicles in the nervous system. Our use of a single label for each cluster should be considered a subjective approximation; an alternate view  is that Cluster 2 also includes articles on cancer and has some focus on astrocytes. Both clusters were  derived from the IKC(5)-Iterative Graclus branch of the pipeline.

We then extracted the authors of articles in these two clusters (Table \ref{tab:author-stats}). Cluster 1 involved 356 authors of which 9 were  
authors of at least 5 articles in the cluster and one person was an  author of 17 articles in the cluster. 
However, 301 authors (84.6\%) had contributed to only one article in the cluster. 
Cluster 2  involved 742 authors of which 4 were authors of at least 5 articles  in the cluster.
One author had contributed 7 articles and 650 authors (87.6\%) had contributed only one article each. Interestingly, the two publication clusters share 14 authors. Thus, the two author communities defined by two disjoint publication clusters  overlap.

 These trends are strikingly similar to the observations of \cite{Price1966} in that the authors segregate into a small number with large numbers of papers  in the cluster and many with only one paper in the cluster. Although we only examined two clusters, both exhibited the center-periphery structure described in \cite{Price1966}; our sample is too small to draw conclusions beyond suggesting that the trends observed may be true for other clusters, and this should be evaluated.
 
We also found discrete co-author groups within these clusters (Figure \ref{fig:coauthors}). Cluster 1 featured 4 non-overlapping co-author groups where authors were linked to each other if they had co-authored at least two articles in the cluster. Cluster 2 featured 17 such discrete co-author groups suggesting, despite its larger size, that influence within the group was more distributed considering the larger number of co-author groups and the smaller number of articles written by individual authors. These examples are provided to illustrate the potential utility of the pipeline and the use of marker nodes. 

\begin{table}[ht]
\centering
\begin{tabular}{rcccccc}
  \hline
 & Articles & Core Nodes & Markers & Authors & Auth\_$5$  & Max\_Pubs  \\ 
  \hline
Cluster 1 & 73 & 47 & 16 & 356 & 9 & 17 \\ 
Cluster 2 & 145 & 129 & 31 & 742 & 4 & 7 \\ 
\hline
\end{tabular}
\caption{\textbf{Statistics on the two selected small clusters}. Cluster 1 and Cluster 2 were selected based on marker nodes: the  greatest number of marker nodes that were found co-located in the same cluster across all 12 clustering methods were used to define their common clusters, and the smallest of the common clusters for each set was returned.   Both clusters are $kmp$-valid for $k=5$ and $p=2.$ 
Articles refers to the total number of nodes in the cluster, core nodes refers to the number of these nodes that are core, and markers refers to the
number of nodes that are markers. 
Authors denotes the total number of authors of articles in the cluster, 
Auth\_$5$ refers to the number of authors that have at least 5 publications in the cluster, and Max\_Pubs refers to the largest number of publications in the cluster by any single author.}
\label{tab:author-stats}
\end{table}

\begin{figure}
    \centering
    \begin{subfigure}[t]{0.48\textwidth}
        \centering
        \includegraphics[width=0.8\linewidth]{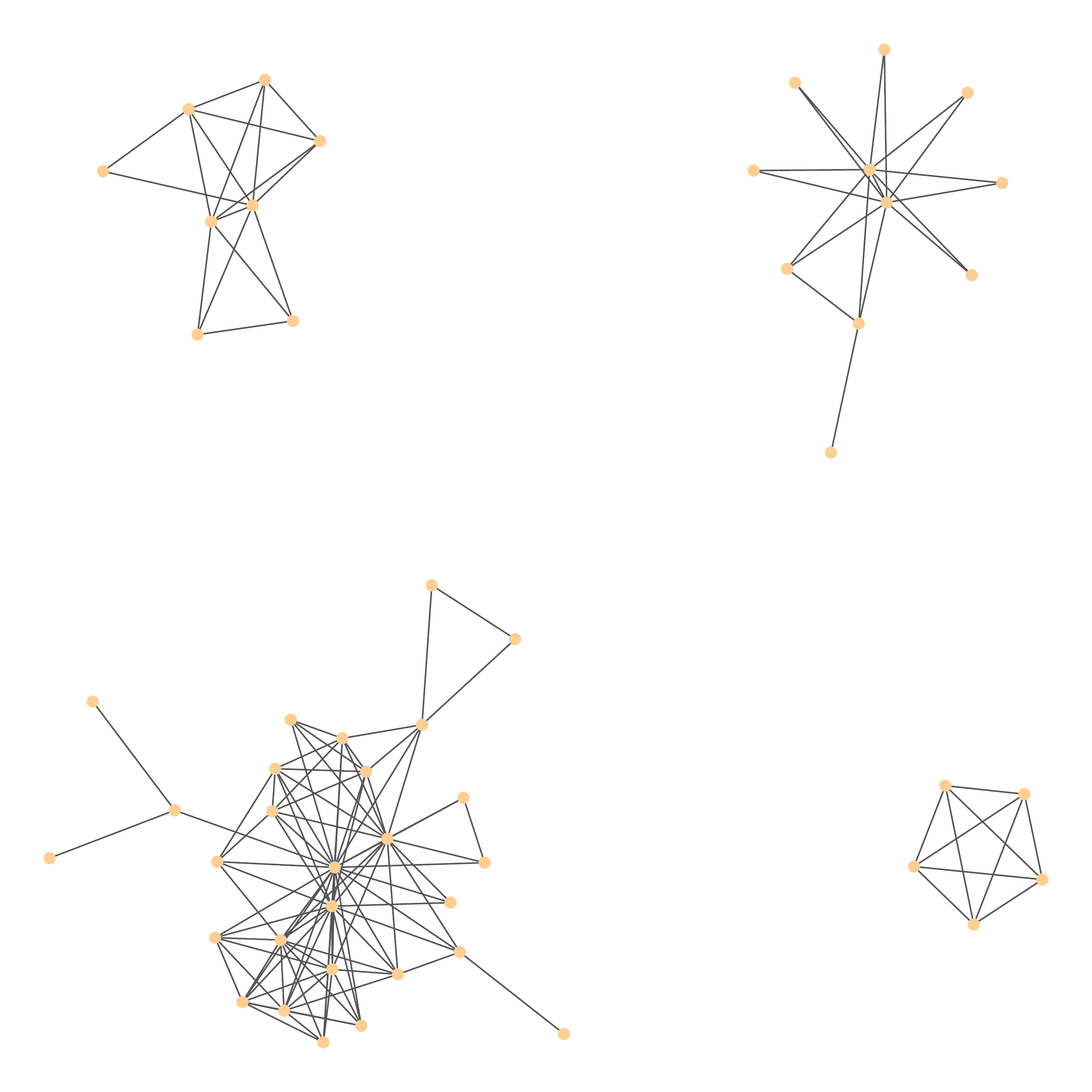} 
        \caption{Cluster 1} \label{fig:timing1}
      \end{subfigure}
    \hfill
    \begin{subfigure}[t]{0.48\textwidth}
        \centering
        \includegraphics[width=0.8\linewidth]{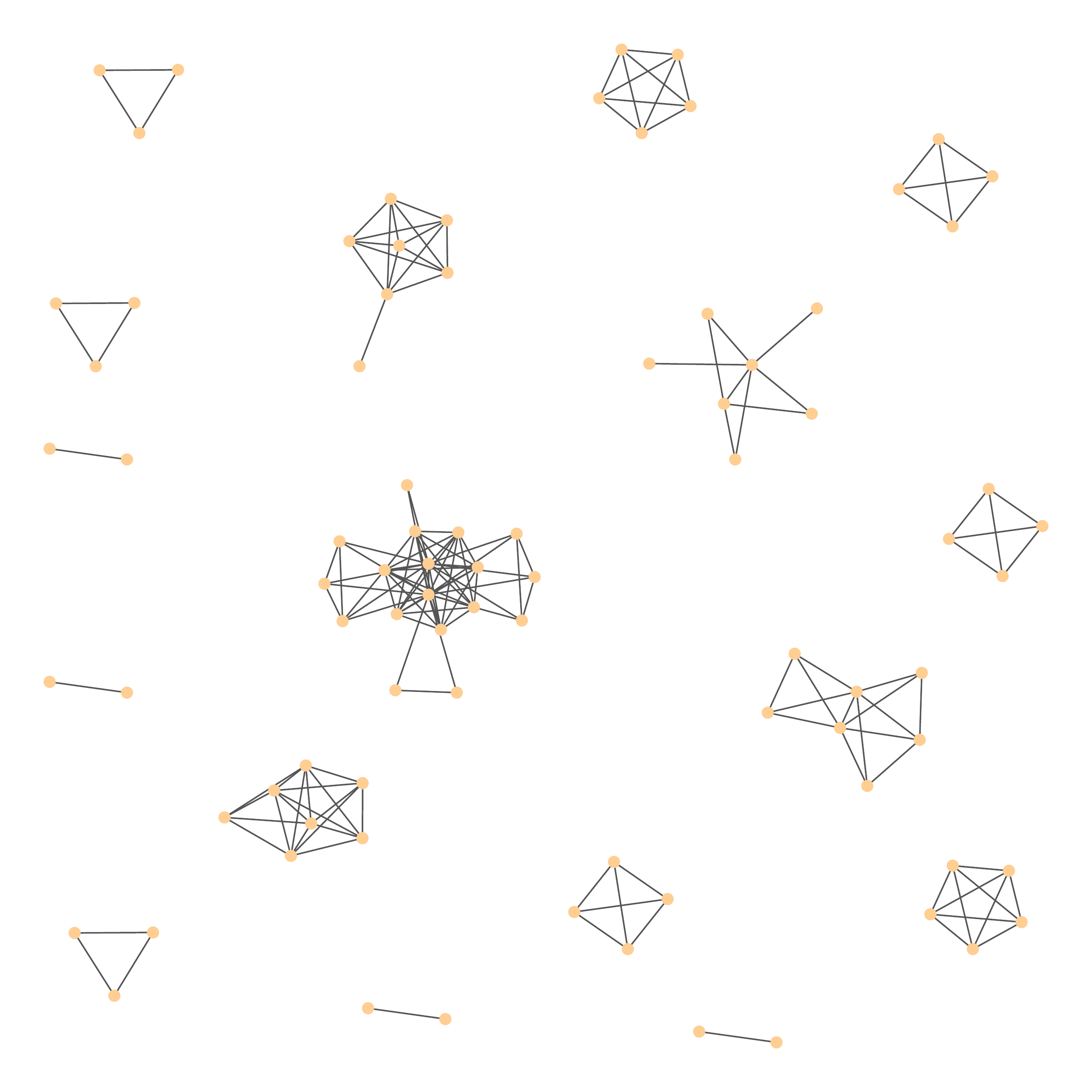} 
        \caption{Cluster 2} \label{fig:timing2}
    \end{subfigure}
 \caption{\textbf{Co-author clusters in two communities identified using marker nodes.} Two clusters were selected for analysis because they contained the greatest number of marker nodes that were co-located  across all 12 clustering methods, and were the smallest of such clusters (see text). Discrete co-author groups are found in both clusters when inclusion in a co-author groups requires at least two instances of intra-cluster co-authorship between two authors. \emph{(a)}  Cluster 1 consists of  73 articles contributed by 356 authors. This cluster contains 4 non-overlapping co-author groups, with 5, 8, 11, and 28 authors. 
 \emph{(b)} Cluster 2 consists of 145 articles authored by 356 authors. This cluster contains 17 non-overlapping groups, with four groups of 2 authors, 
 three groups of 3 authors, three groups of 4 authors, two groups of 5 authors, four groups of 7 authors, and one group of 18 authors. }
 \label {fig:coauthors}
 \end{figure}
\section{Conclusions} 

Based on historical studies of research communities, we posed corresponding properties for the graphical structure of communities in networks.  We developed an analytic  pipeline to ask whether communities of publications with center-periphery substructure exist in citation networks.
In designing a four-stage pipeline to find communities of this form, we were implicitly asking whether the information encoded in the graphical structure of communities can be used to make inferences on the social structure of these communities, for example, discrete co-author groups. We examined these questions using a citation network representative of exosomes and extracellular vesicles, a field that has rapidly expanded in recent years. 

In this citation network,  our pipeline found many publication communities that exhibit center-periphery structure.
This finding supports our hypothesis that communities of this type exist within the extracellular vesicles research community,
and shows that the pipeline we used can find such communities. 
Whether such communities exist in other citation networks and whether our pipeline is successful at finding such communities are important
questions that future work should address.

Our pipeline is designed to enable investigators to interrogate their data with different options for each stage  and different settings for $k$ and $p$. 
As we observed in our study, changes to the settings for the parameters $k$ and $p$ as well as how each stage was performed produced  clusterings that differ from each other in terms of the the node coverage as well as the number and sizes of non-singleton clusters, effectively providing different views of the network. 
Thus, the specific question of interest and the properties of the citation network are important in choosing how to set these parameters.
Alternatively, the pipeline can be used to generate many different clusterings, and the investigator may assess community structure through an integrative analysis that does not depend on a single clustering method.

We also saw significant differences between clusterings produced by our pipeline and those produced by the Leiden software. 
While there is some overlap in the range of cluster sizes generated, our pipeline tends to generate much larger clusters than the Leiden software, and all of our pipeline clusterings produced at least one very large cluster with more than 19,000 nodes and, in the case of Recursive Graclus, the largest cluster contained more than 500,000 nodes.  Given our focus on the small clusters produced in our various pipelines, we did not explore large  publication clusters. It is not clear to us what insights can be gained from clusters that have tens of thousands of nodes. We speculate that they may reflect the many connections in a rapidly expanding field, and we also consider them as tempting targets for future versions of Stage 2, which seeks to break up the large clusters. Further work is clearly needed.

This study suggests several directions for future work. 
For method development, 
our four-stage pipeline is designed to enable substitutions to how each stage is performed; as noted above,  breaking up large clusters might be more successfully executed using new approaches rather than either recursive or iterative Graclus. 
We also note that all the clustering methods we developed produce disjoint clusters, yet  publications may be expected to be members of more than one research community.  Some clustering methods have been developed that can produce overlapping clusters  \citep{Rossetti2020}, and exploring this approach in the context of large citation networks is likely to provide additional insights. Methods that can combine information from multiple clustering methods  could lead to better insight into community structure, and so principled development of ensemble methods (a standard approach in machine learning) is another direction for future research.

One of the most intriguing directions  for future research is  the life cycle of these research communities, both in terms of how ideas and questions being focused on by the research community as a whole change over time and how authors move between different communities over time. This is challenging to study because the network evolves through growth each year and community finding in dynamic graphs is a promising direction \citep{Rossetti2020}. 

 Additional clusters could be examined for in-depth examination based, for example, on specific authors, articles of particular importance, funding sources, or clusters identified by marker node co-located in many but not all 12 clusterings. Other insights could be obtained by examining the relationship between publication communities (e.g., citations between communities) in a single clustering or comparisons of communities obtained using different clustering methods; such investigations would help elucidate how the research ideas and communities relate to each other, and the extent to which these communities are  hierarchically organized. 
 
 For extracellular vesicles, insight into community life cycles can be obtained by studying, over time, communities containing key studies such as the ones on transferrin recycling  \citep{harding1983,pan1983}, the observation that B-lymphocytes secrete antigen-presenting vesicles \citep{Raposo1996}, a report of exosome mediated transfer of mRNA and microRNA\citep{rata2006,Lotvall2007}, and the biological effects of transferring exosomes between lean and obese mice \citep{Olefsky2017}. Our analysis was based on a single set of marker nodes;  extending the set of marker nodes and annotating each marker node with respect to content would allow finer-grained evaluation. 
 
Exploration of author communities associated to these publication communities would shed additional insights into the social structure of the research community, potentially identifying authors with high influence within a particular emerging research area, and others that are highly influential across several areas.

We close with comments about the high-level approach we took to understanding community structure. 
 Our approach relied entirely on the graphical structure of the citation network. This restriction was used in order to provide a scalable approach that did not rely on any other information or expert knowledge; however, by design this limits which communities can be detected  \citep{McCain1986}.  Textual analysis, relationships between authors based on institutions, other social interactions such as conference presentations and sources of funding could be used to used to supplement citation data and would likely lead to a different set of publication or author communities with potentially different properties. 
 
 Understanding author role is also important, and again our reliance on citations to identify influential researchers is biased towards well-cited and well-funded authors. Perhaps one of the benefits, therefore, of our approach to community detection is that we can use it to find small and thematically-focused publication communities and hence identify those authors who are influential within these small communities. Nevertheless, we propose that while scalable methods for community detection may generally tend to rely on purely graph-theoretic properties, mixed method approaches support more a more nuanced understanding of social structures and dynamics within the scientific enterprise. 

\section*{Competing Interests} \vspace{3mm} The authors have no competing interests. Dimensions data were made available by Digital Science through the \href{http://www.dimensions.ai/scientometric-research/.}{free data access for scientometrics research projects program}. Digital Science personnel did not participate in conceptualization, experimental design, review of results, or conclusions presented. DK is an employee of NTT DATA, which had no role in this study.

\section*{Funding Information} EW is a Siebel Scholar.  TW receives funding from the Grainger Foundation. Research reported in this manuscript was supported by the Google Cloud Research Credits program through award GCP19980904 to GC.

\section*{Data Availability} Access to the bibliographic data analyzed in this study requires access from Digital Science. Code generated for this study is freely available from our Github site \citep{Park2021}.

\section*{Acknowledgments} We thank Valerie King from the University of Victoria for directing us to the k-core literature. We thank Phil Stahl from Washington University in St.~Louis for helpful discussions and for drawing our attention to recent reviews of the extracellular vesicle literature. We thank Digital Science, Google, the Grainger Foundation, and the Thomas and Stacey Siebel Foundation.   

\section*{ORCID IDs}
\begin{itemize}
\item Eleanor Wedell 0000-0002-7911-9156
\item Minhyuk Park 0000-0002-8676-7565
\item Dmitriy Korobskiy  0000-0002-7909-0218
\item Tandy Warnow 0000-0001-7717-3514
\item George Chacko 0000-0002-2127-1892
\end{itemize}


\end{document}